\documentclass[journal,twocolumn,a4]{IEEEtran}
\usepackage{blindtext, graphicx}
\usepackage{amsmath}
\usepackage{amsthm}
\usepackage{amssymb}

\usepackage[justification=centering]{caption}
\usepackage{algorithm}
\usepackage{arydshln}
\usepackage[noend]{algpseudocode}
\usepackage{multirow}
\usepackage{tikz}
\usepackage[colorinlistoftodos]{todonotes}
\usetikzlibrary{shapes,arrows}
\setlength{\tabcolsep}{2pt}
\newtheorem{lemma}{Lemma}

\newtheorem{definition}{Definition}

\hyphenation{op-tical net-works semi-conduc-tor}

\begin{document}
\title{Decoding of NB-LDPC codes over Subfields}

\author{
V. B. Wijekoon, Emanuele Viterbo, Yi Hong\\
Monash University, Australia
}

\maketitle

\begin{abstract}
The non-binary low-density parity-check (NB-LDPC) codes can offer promising performance advantages but suffer from high decoding complexity. To tackle this challenge, in this paper, we consider NB-LDPC codes over finite fields as codes over  \textit{subfields} as a means of reducing decoding complexity. In particular, our approach is based on a novel method of expanding a non-binary Tanner graph over a finite field into a graph over a subfield. This approach offers several decoding strategies for a single NB-LDPC code, with varying levels of performance-complexity trade-offs. Simulation results demonstrate that in a majority of cases, performance loss is minimal when compared with the complexity gains.

\end{abstract}

\begin{IEEEkeywords}
Non-binary LDPC codes, Graph expansion, Iterative decoding
\end{IEEEkeywords}


\section{Introduction}
Low-density parity-check (LDPC) codes, which were first introduced by Gallager in 1962 \cite{ldpc:gallager}, have become the error-correcting codes of choice for many practical applications, such as Ethernet, Wi-Fi, and digital television, due to their capacity approaching performance and low-complexity decoding algorithms \cite{ldpc:mckay}. Davey and Mckay introduced the non-binary (NB) counterparts of these codes in 1998 \cite{nb-ldpc:davey-mckay}, and it was soon realized that NB-LDPC codes outperform the binary LDPC codes of comparable length, especially for short-to-moderate code lengths. But these performance gains are yet to be realized in practice due to the high complexity of decoding algorithms.

Best performing decoding algorithm for NB-LDPC codes is the Q-ary sum-product algorithm (QSPA), a generalization of the sum-product algorithm used with binary LDPC codes \cite{nb-ldpc:davey-mckay}. Complexity of QSPA is of the order $\mathcal{O}(q^2)$, where $q$ is the cardinality of the algebraic structure over which the code is defined. However, 
this complexity is too high for most practical applications, and in addition, QSPA also requires a lot of resources, particularly since messages used in decoding are vectors of length $q$. Fast Fourier transform based implementation of QSPA (FFT-QSPA) reduces decoding complexity to $\mathcal{O}(q\log q)$, but still requires similar levels of hardware resources \cite{qspa-fft:declerq}. Log-domain implementations of QSPA (LLR-QSPA) have also been considered in the literature \cite{qspa-llr:wymeersch}.

In \cite{qspa-llr:wymeersch}, the authors introduced a simplified version of LLR-QSPA, referred to as `max-log-SPA', by extending the simplification used in min-sum decoding to NB-LDPC codes and QSPA. This approach was further developed in \cite{ems:declerq-fossorier} by introducing the `Extended Min-Sum' (EMS) algorithm. Instead of considering the complete length $q$ vectors at check node operations, EMS uses the $n_m$ most significant values of each vector, resulting in a complexity order of $\mathcal{O}(n_mq)$. A different approach to simplifying operations of QSPA was proposed in \cite{minmax:savin}, where the `Min-max' algorithm was introduced that has the same complexity order as QSPA, but requires only additions and comparisons. Efficient hardware implementations were proposed for both EMS and min-max algorithms in \cite{trellis-ems:li}, \cite{trellis-minmax:lacruz}.

Expanding the parity-check matrix (PCM) of a NB-LDPC code into a \textit{binary} one allows devising low-complexity bit-level decoding strategies for the non-binary code. Such expansion was proposed in \cite{bec:savin}, called the `extended binary representation'. Additionally, a decoding algorithm for NB-LDPC codes over the binary erasure channel was introduced in \cite{bec:savin}. This strategy is adapted to general channels, as discussed in \cite{generalized-rep:Yu}. In \cite{bit-level:zhang}, the authors used the \textit{binary image} of the non-binary PCM to decode NB-LDPC codes.


While it is possible to construct NB-LDPC codes over many algebraic structures, they are often defined over finite fields, particularly those of characteristic 2 \cite{nb-ldpc:davey-mckay}, i.e. $\mathbb{F}_{2^r}$. A finite field $\mathbb{F}_{p^r}$ contains a unique subfield $\mathbb{F}_{p^m}$, for every $m~|~r$ \cite{book:rothman}. In this paper, we consider expanding the PCM of a NB-LDPC code over $\mathbb{F}_{p^r}$ to a matrix over any such subfield $\mathbb{F}_{p^m}$. For codes over $\mathbb{F}_{2^r}$, this includes the expansion into $\mathbb{F}_{2}$, a binary expansion. We then propose a general decoding algorithm, usable with any of the many possible expansions. Since the operations of the decoder would now be with a smaller size field, significant gains in complexity is achievable, and simulation results demonstrate that the performance loss in comparison to QSPA is minimal. Also, it now becomes possible to decode the same code over several different fields, each offering a different performance-complexity trade-off. 

The remaining of the paper is organized as follows. Section II introduces the mathematical concepts used for the expansion, while Section III provides the expansion along with examples. Section IV presents the decoding strategy, whereas Section V includes simulation results. Section VI analyzes the complexity and resource requirements of the new decoding scheme, and Section VII concludes the paper.

\section{$\alpha$-connected Subgroups}
Consider a finite field of characteristic $p$, $\mathbb{F}_{p^r}$, and one of its additive subgroups, $G$. We denote a primitive element of $\mathbb{F}_{p^r}$ with $\alpha$. It is easy to verify that multiplying all the elements in $G$ with some $\alpha^i \in \mathbb{F}_{p^r}$ yields another additive subgroup. Then we have the following definition.

\begin{definition}
Additive subgroups $G_1$ and $G_2$ of $\mathbb{F}_{p^r}$, where one can be obtained from the other by multiplying by some power of $\alpha$, are called $\alpha-${\bf\em connected subgroups}.
\end{definition}

If subgroups $G_1$ and $G_2$ are $\alpha$-connected, and so are $G_2$ and $G_3$, then clearly $G_1$ and $G_3$ are also $\alpha$-connected. This yields the following denifition. 

\begin{definition}
A set of subgroups $S=\{G_1,..,G_n\}$ of $\mathbb{F}_{p^r}$ is an $\alpha$-connected set if
\begin{enumerate}
    \item {
        each $G_i\in S$ is $\alpha$-connected with all other $G_j\in S$.
    }
    \item {
        $G_i \in S$ is $\alpha$-connected with some $G$, then $G\in S$.
    }
\end{enumerate}
\end{definition}

An $\alpha$-connected set $S$ can be generated using any $G_i\in S$, simply by multiplying with increasing powers of $\alpha$. Each generated subgroup will be added to the set until for some power $i^*$, $\alpha^{i^*}.G_i$ results in $G_i$ itself. This $i^*$ would be the cardinality of the set $S$, which we denote with $|S|$. Lemma 1 considers the minimum possible cardinality of an $\alpha$-connected set.

\begin{lemma}
Consider $\mathbb{F}_{p^r}$ and let $m~|~r$. Then the smallest possible $\alpha$-connected set of additive subgroups of order $p^{r-m}$ has a cardinality of $\frac{p^r-1}{p^m-1}$.  
\end{lemma}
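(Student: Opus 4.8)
The plan is to fix one admissible subgroup $G$ of order $p^{r-m}$ and determine the cardinality $i^*$ of the $\alpha$-connected set it generates, which by the discussion preceding the lemma is the least positive integer with $\alpha^{i^*}G=G$; minimizing $i^*$ over all such $G$ then gives the answer. Since every nonzero element of $(\mathbb{F}_{p^r},+)$ has additive order $p$, any such $G$ is automatically an $\mathbb{F}_p$-subspace of dimension $r-m$. The device I would introduce is the multiplicative stabilizer $H=\{x\in\mathbb{F}_{p^r}^*:xG=G\}$. Because $\mathbb{F}_{p^r}^*=\langle\alpha\rangle$ is cyclic, $H$ is the unique subgroup of its order, namely $\langle\alpha^{(p^r-1)/|H|}\rangle$, and therefore $i^*=(p^r-1)/|H|$. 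So the entire problem reduces to: how large can $|H|$ be?

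The key step — and the one I expect to be the real content of the proof — is to show that $\widetilde H:=H\cup\{0\}$ is a subfield of $\mathbb{F}_{p^r}$. Closure under multiplication and under multiplicative inverses is immediate, and $0,1\in\widetilde H$; the only genuine point is closure under addition. For $x,y\in H$ one has $(x+y)G\subseteq xG+yG=G+G=G$ because $G$ is an additive subgroup; if moreover $x+y\neq0$, then multiplication by $x+y$ is a bijection of $\mathbb{F}_{p^r}$, so $(x+y)G$ has the same cardinality as $G$, and the inclusion forces $(x+y)G=G$, i.e. $x+y\in H$. Hence $\widetilde H=\mathbb{F}_{p^d}$ for some $d\mid r$.

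Now $G$ is stable under multiplication by $\widetilde H$, so $G$ is an $\mathbb{F}_{p^d}$-vector space; consequently $|G|=p^{r-m}$ is a power of $p^d$, giving $d\mid(r-m)$, and together with $d\mid r$ this yields $d\mid m$. Then $p^d-1$ divides $p^m-1$, so $\frac{p^r-1}{p^m-1}$ divides $\frac{p^r-1}{p^d-1}=i^*$, and in particular $i^*\ge\frac{p^r-1}{p^m-1}$ for every admissible $G$. To see the bound is attained, I would take $G$ to be any $\mathbb{F}_{p^m}$-hyperplane of $\mathbb{F}_{p^r}$, regarded as a $k$-dimensional $\mathbb{F}_{p^m}$-space with $k=r/m$; then $|G|=(p^m)^{k-1}=p^{r-m}$ and $\mathbb{F}_{p^m}^*\subseteq H$, so $\mathbb{F}_{p^m}\subseteq\widetilde H=\mathbb{F}_{p^d}$ forces $m\mid d$, which combined with $d\mid m$ gives $\widetilde H=\mathbb{F}_{p^m}$ exactly and hence $i^*=\frac{p^r-1}{p^m-1}$. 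Combining the lower bound with this example establishes the claim.
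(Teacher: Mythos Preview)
Your argument is correct, but the route is genuinely different from the paper's. The paper fixes $G$, lets $i_m$ be the least $i$ with $\alpha^iG=G$, and observes that the nonzero elements of $G$ split into orbits of size $|S_{\alpha^{i_m}}|$ under the cyclic group $S_{\alpha^{i_m}}=\langle\alpha^{i_m}\rangle$; this gives $|S_{\alpha^{i_m}}|\mid(p^{r-m}-1)$ in addition to $|S_{\alpha^{i_m}}|\mid(p^r-1)$, and then $\gcd(p^r-1,p^{r-m}-1)=p^m-1$ is computed by an explicit factorisation. As the paper itself remarks, that argument uses only that $0\in G$, not that $G$ is closed under addition, and it establishes only the lower bound $i_m\ge\frac{p^r-1}{p^m-1}$; the attainment is deferred to a later lemma. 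Your approach instead passes through the full stabilizer $H$ and shows $H\cup\{0\}$ is a subfield $\mathbb{F}_{p^d}$, a step that genuinely exploits the additive closure of $G$; the conclusion $d\mid m$ then follows from $G$ being an $\mathbb{F}_{p^d}$-vector space. What this buys you is a clean self-contained existence argument via an $\mathbb{F}_{p^m}$-hyperplane, so your proof handles both the bound and its sharpness at once, whereas the paper's more elementary orbit-counting gives the bound under a weaker hypothesis but leaves the construction for later.
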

\begin{proof}
Let $G$ be an additive subgroup of $\mathbb{F}_{p^r}$, and $|G|=p^{r-m}$. Assume $\alpha^i\cdot G=G$ for some values $i\in \{1,...,p^r-2\}$. Minimum possible cardinality of an $\alpha$-connected set is the minimum possible value $i$ satisfying $\alpha^i\cdot G=G$, denoted by $i_m$.

Let $S_{\alpha^{i_m}}$ be the set of elements in $\mathbb{F}_{p^r}$ generated by $\alpha^{i_m}$. Note that $i_m$ is the minimum non-zero power of $\alpha$  in $S_{\alpha^{i_m}}$. If $\alpha^{i_m}\cdot G=G$, then clearly $\alpha^{ki_m}\cdot G=G$, for any $\alpha^{ki_m}\in S_{\alpha^{i_m}}$. Since we are focused on the minimum, we only consider $i_m$, for which the following relation holds.
\[ i_m|S_{\alpha^{i_m}}|=(p^r-1) \tag{1}\]

Since we assume $\alpha^{i_m}\cdot G=G$, if $g\in G$, then $g\cdot S_{\alpha^{i_m}} \subset G$. As $G$ is an additive subgroup, it must contain the additive identity $0$, and $0\cdot S_{\alpha^{i_m}}=\{0\}$. Note that for $g_1,g_2\in G$ that are both $\neq0$, sets $g_1.S_{\alpha^{i_m}}$ and $g_2.S_{\alpha^{i_m}}$ would be of the same size, and they should either be the same set or disjoint. Then, as order of $G$ is $p^{r-m}$, disregarding $0$, following should hold for some value $n$.
\[ n|S_{\alpha^{i_m}}| = (p^{r-m}-1) \tag{2} \]

From (1) and (2), we see that $|S_{\alpha^{i_m}}|$ is a factor of both $(p^r-1)$ and $(p^{r-m}-1)$. Since (1) shows that $i_m$ and $|S_{\alpha^{i_m}}|$ are inversely proportional, $|S_{\alpha^{i_m}}|$ should be the \textit{greatest common divisor} of $(p^r-1)$ and $(p^{r-m}-1)$. We note the following.
\begin{align}
(p^r-1)=(p^m-1)\sum_{i=0}^{\frac{r}{m}-1}(p^m)^i\nonumber \\
(p^{r-m}-1)=(p^m-1)\sum_{i=0}^{\frac{r}{m}-2}(p^m)^i\nonumber\\
\sum_{i=0}^{\frac{r}{m}-1}(p^m)^i=p^m\sum_{i=0}^{\frac{r}{m}-2}(p^m)^i+1\nonumber
\end{align}

Above allows us to conclude that $\gcd(p^r-1,p^{r-m}-1)=(p^m-1)$, and using (1);
\[ i_m=\frac{p^r-1}{p^m-1} \]
\end{proof}

Even though lemma 1 shows that the smallest $\alpha$-connected set should be of cardinality $\frac{p^r-1}{p^m-1}$, it does not reveal how to construct such a set. It should also be noted that the additive property of $G$ is not necessary for the proof. Only the existence of the additive identity is used.

Lemma 2 outlines a method to construct an $\alpha$-connected set of subgroups of order $p^{r-m}$.

\begin{lemma}
Let $G$ be a subgroup of order $p^m$ in $H'=\{\mathbb{F}_{p^r}, +\}$, and $\psi$ some surjective homomorphism $\psi~:~H'\rightarrow G$. The kernels of the set of homomorphisms  $\psi_i(h')=\psi(\alpha^{-i}h')$, for $i=\{0,1,...,p^r-2\}$, form an $\alpha$-connected set of additive subgroups of order $p^{r-m}$. 
\end{lemma}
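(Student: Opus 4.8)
The plan is to verify three things: (i) each $\ker\psi_i$ is an additive subgroup of $\mathbb{F}_{p^r}$ of order $p^{r-m}$; (ii) any two of these kernels are $\alpha$-connected; and (iii) the collection $\{\ker\psi_i : i=0,\dots,p^r-2\}$ is closed under $\alpha$-connectedness, so that it forms an $\alpha$-connected set in the sense of Definition 2. For (i), note that each $\psi_i$ is a composition $h'\mapsto \alpha^{-i}h' \mapsto \psi(\alpha^{-i}h')$; the first map is an additive automorphism of $H'$ (multiplication by a nonzero field element is $\mathbb{F}_p$-linear and bijective), and $\psi$ is a surjective homomorphism onto $G$, so $\psi_i$ is a surjective homomorphism $H'\to G$. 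By the first isomorphism theorem, $|H'|/|\ker\psi_i| = |G| = p^m$, hence $|\ker\psi_i| = p^{r-m}$.

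For (ii), the key computation is to relate $\ker\psi_i$ to $\ker\psi_j$ by a power of $\alpha$. We have $h'\in\ker\psi_i \iff \psi(\alpha^{-i}h')=0 \iff \alpha^{-i}h'\in\ker\psi \iff h'\in\alpha^i\ker\psi$. Thus $\ker\psi_i = \alpha^i\ker\psi$, and therefore $\ker\psi_i = \alpha^{\,i-j}\ker\psi_j$, which exhibits the two kernels as $\alpha$-connected subgroups (Definition 1). For (iii), suppose $\ker\psi_i$ is $\alpha$-connected with some additive subgroup $K$, say $K=\alpha^t\ker\psi_i = \alpha^{t+i}\ker\psi$; choosing $j\equiv t+i \pmod{p^r-1}$ gives $K=\alpha^j\ker\psi=\ker\psi_j$, so $K$ is already in the set. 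Together with (ii) this shows the set satisfies both conditions of Definition 2. Finally, each kernel is a genuine additive subgroup: $\ker\psi_i=\alpha^i\ker\psi$ is the image of the subgroup $\ker\psi$ under the additive automorphism ``multiply by $\alpha^i$'', hence itself a subgroup.

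I expect the only subtle point to be confirming that $\psi$ being merely surjective (not an isomorphism) causes no trouble — it does not, since all we use is that $\ker\psi$ is a subgroup of index $p^m$, and every surjective homomorphism onto a group of order $p^m$ has such a kernel. A minor bookkeeping point is that the indices $i$ run over a complete set of residues modulo $p^r-1$ (powers of $\alpha$ repeat with period $p^r-1$), so the map $i\mapsto\ker\psi_i=\alpha^i\ker\psi$ is exactly the orbit of $\ker\psi$ under multiplication by powers of $\alpha$; the number of \emph{distinct} kernels then equals the orbit size, which by Lemma 1 is at least $\frac{p^r-1}{p^m-1}$, and this construction realizes an $\alpha$-connected set consistent with that bound. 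No step requires a hard estimate; the argument is essentially the first isomorphism theorem plus the identity $\ker\psi_i=\alpha^i\ker\psi$.
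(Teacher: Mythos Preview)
Your proof is correct and follows essentially the same approach as the paper: both hinge on the identity $\ker\psi_i=\alpha^i\ker\psi$, from which $\alpha$-connectedness of any pair and closure of the whole collection follow immediately. Your write-up is simply more explicit than the paper's (you spell out why each $\psi_i$ is a surjective homomorphism and verify both clauses of Definition~2 separately), and your closing remarks about orbit size and Lemma~1 are extra commentary rather than part of the proof proper.
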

\begin{proof}
For $\psi~:~H'\rightarrow G$, $\ker(\psi)$ is an additive subgroup of order $p^{r-m}$ in $\mathbb{F}_{p^r}$ \cite{book:rothman}. Since $\psi_i(h')=\psi(\alpha^{-i} h')$, it is clear that $\ker(\psi_i)=\alpha^i \ker(\psi)$. Thus, $\ker(\psi)$ and $\ker(\psi_i)$ are $\alpha$-connected subgroups for all possible $i$. Also, for any $i_1,i_2$, $\ker(\psi_{i_1})=\alpha^{i_1-i_2}\ker(\psi_{i_2})$. Then, the set of kernels $S=\{ \ker(\psi_0),\ker(\psi_1),...,\ker(\psi_{p^r-2})\}$, where possible duplicates have been removed, satisfy the conditions of Definition 2, and thus form a $\alpha$-connected set.
\end{proof}
\vspace{2mm}

Cardinality of an $\alpha$-connected set generated as in Lemma 2 depends on the homomorphism $\psi$. Therefore, to construct the smallest $\alpha$-connected set, one must find a suitable homomorphism. The homomorphism we use is based on the representation of $\mathbb{F}_{p^r}$ as an \textit{extension} of the subfield $\mathbb{F}_{p^m}$. The following Lemma establishes the structure of $\mathbb{F}_{p^m}$ in $\mathbb{F}_{p^r}$.

\begin{lemma}
Let $S_\beta$ be the set of elements in $\mathbb{F}_{p^r}$ generated by $\beta = \alpha^{\frac{p^r-1}{p^m-1}}$. Then, $S_\beta \cup \{0\}$, where $0$ is the additive identity of $\mathbb{F}_{p^r}$, is the subfield $\mathbb{F}_{p^m}$.
\end{lemma}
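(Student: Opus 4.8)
The plan is to identify $S_\beta$ with the set of nonzero elements of the unique subfield of order $p^m$ by a direct computation of the order of $\beta$, and then to invoke the standard description of that subfield as the root set of $X^{p^m}-X$. So the argument splits into three short steps: compute $\mathrm{ord}(\beta)$; show $S_\beta\cup\{0\}$ is exactly the root set of $X^{p^m}-X$ in $\mathbb{F}_{p^r}$ by a cardinality count; and recall that this root set is a subfield.

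First I would compute the multiplicative order of $\beta$. Since $m \mid r$, the integer $k := \frac{p^r-1}{p^m-1} = \sum_{i=0}^{r/m-1}(p^m)^i$ is a well-defined divisor of $p^r-1$ (this factorization already appears in the proof of Lemma 1). In the cyclic group $\mathbb{F}_{p^r}^{*}=\langle\alpha\rangle$ of order $p^r-1$, the element $\beta=\alpha^{k}$ therefore has order $\frac{p^r-1}{\gcd(p^r-1,\,k)}=\frac{p^r-1}{k}=p^m-1$, using that $k$ divides $p^r-1$ so $\gcd(p^r-1,k)=k$. Hence $S_\beta=\langle\beta\rangle$ is the unique cyclic subgroup of $\mathbb{F}_{p^r}^{*}$ of order $p^m-1$, and every $x\in S_\beta$ satisfies $x^{p^m-1}=1$, i.e. $x^{p^m}=x$.

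Next I would pass to $S_\beta\cup\{0\}$. Every element of this set is a root of $f(X)=X^{p^m}-X\in\mathbb{F}_{p^r}[X]$: this is immediate for $0$, and for $x\in S_\beta$ it follows from $x^{p^m}=x$. Since $f$ has degree $p^m$ it has at most $p^m$ roots in $\mathbb{F}_{p^r}$, while $|S_\beta\cup\{0\}|=(p^m-1)+1=p^m$; so $S_\beta\cup\{0\}$ is precisely the set of roots of $f$ in $\mathbb{F}_{p^r}$. Finally I would observe that this root set is a subfield of $\mathbb{F}_{p^r}$: it contains $0$ and $1$, it is closed under multiplication and inversion because $S_\beta$ is a group, and it is closed under addition and subtraction because the Frobenius map $x\mapsto x^{p^m}$ is a ring homomorphism on the characteristic-$p$ field $\mathbb{F}_{p^r}$, so $(a\pm b)^{p^m}=a^{p^m}\pm b^{p^m}=a\pm b$ for $a,b$ in the root set. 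A subfield of order $p^m$ is unique, whence $S_\beta\cup\{0\}=\mathbb{F}_{p^m}$.

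I do not expect a genuine obstacle here; the only points needing care are the order computation for $\beta$ — specifically the observation $\gcd(p^r-1,k)=k$ since $k\mid(p^r-1)$ — and making sure the count ``at most $p^m$ roots'' combined with $|S_\beta\cup\{0\}|=p^m$ is used to force equality, not merely an inclusion.
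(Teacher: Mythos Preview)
Your proof is correct. It takes a somewhat different route from the paper's, though both hinge on the same order computation $|S_\beta|=p^m-1$. The paper first invokes the existence of the subfield $\mathbb{F}_{p^m}\subset\mathbb{F}_{p^r}$ (guaranteed by $m\mid r$), observes that its multiplicative group is a subgroup of order $p^m-1$ in the cyclic group $\mathbb{F}_{p^r}^{*}$, and then appeals to the uniqueness of subgroups of a given order in a cyclic group to force this multiplicative group to equal $S_\beta$. Your argument instead builds the subfield from scratch: you identify $S_\beta\cup\{0\}$ with the full root set of $X^{p^m}-X$ by a cardinality count and then verify closure directly via the Frobenius endomorphism. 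The paper's version is shorter once one grants the ambient field-theory facts, while yours is more self-contained in that it does not assume the existence of $\mathbb{F}_{p^m}$ inside $\mathbb{F}_{p^r}$ beforehand; either is perfectly adequate for this lemma.
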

\begin{proof}
Since $m~|~r$, $\mathbb{F}_{p^r}$ should contain the subfield $\mathbb{F}_{p^m}$. Let $K'=\{\mathbb{F}_{p^r}, \times\}$ and $K=\{\mathbb{F}_{p^m}, \times\}$. $K$ is a subgroup of order $(p^m-1)$ of $K'$. Note that both $K$ and $K'$ are cyclic. From properties of subgroups of cyclic groups \cite{book:rothman}, there should be only one unique subgroup of a specific order in $K'$. Set of elements generated by $\beta$, $S_\beta$, is such a subgroup, of order $(p^m-1)$, and thus $K=S_\beta$. This allows the conclusion $\mathbb{F}_{p^m}=S_\beta \cup \{0\}$.
\end{proof}
\vspace{2mm}

We are interested in the polynomial representation of $\mathbb{F}_{p^r}$ as an extension of $\mathbb{F}_{p^m}$. In such a representation, some $\alpha^i\in \mathbb{F}_{p^r}$ is represented with a polynomial $E_{\alpha^i}(x)$ over $\mathbb{F}_{p^m}$, of degree at most $(\frac{r}{m}-1)$. In the case of elements belonging to $\mathbb{F}_{p^m}$ (for $\beta^i$), the polynomials would be of degree $0$. Primitive polynomial for the representation, $\Pi(x)$, is of degree $\frac{r}{m}$. Also note that since $\Pi(x)$ is irreducible, it should have a non-zero constant term. Based on this representation, we define a homomorphism $\psi^*$ between the additive groups of $\mathbb{F}_{p^r}$ and $\mathbb{F}_{p^m}$ as follows.

\begin{definition}
Let $H'=\{\mathbb{F}_{p^r}, +\}$, and $H=\{\mathbb{F}_{p^m}, +\}$. Homomorphism $\psi^*~:~H'\rightarrow H$ is mapping $h'\in H'$ to $h\in H$ if the constant term in $E_{h'}(x)$ is $h$.
\end{definition}

Using homomorphism $\psi^*$ in the method proposed in Lemma 2 generates an $\alpha$-connected set of minimum cardinality.

\begin{lemma}
The set of kernels of homomorphisms $\psi_i^*(h')=\psi^*(\alpha^{-i}h')$, for $i=\{0,1,...,p^r-2\}$ form an $\alpha$-connected set of additive subgroups of order $p^{r-m}$ of the minimum cardinality $\frac{p^r-1}{p^m-1}$.
\end{lemma}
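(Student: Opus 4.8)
The plan is to combine Lemma 2, Lemma 1, and Lemma 3 in a straightforward way. By Lemma 2, the kernels of the homomorphisms $\psi_i^*(h')=\psi^*(\alpha^{-i}h')$ already form an $\alpha$-connected set of additive subgroups of order $p^{r-m}$, so the only thing left to establish is that this particular set attains the minimum cardinality $\frac{p^r-1}{p^m-1}$ guaranteed by Lemma 1. Since Lemma 1 shows no $\alpha$-connected set can be smaller, it suffices to exhibit a single power $i^* \leq \frac{p^r-1}{p^m-1}$ with $\alpha^{i^*}\cdot\ker(\psi^*)=\ker(\psi^*)$; equivalently, recalling from the proof of Lemma 2 that $\ker(\psi_i^*)=\alpha^i\ker(\psi^*)$, I want to show $\ker(\psi_{i^*}^*)=\ker(\psi^*)$ for $i^*=\frac{p^r-1}{p^m-1}$.

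First I would identify $\ker(\psi^*)$ explicitly from Definition 3: $\psi^*$ sends $h'$ to the constant term of its polynomial representation $E_{h'}(x)$ over $\mathbb{F}_{p^m}$, so $\ker(\psi^*)$ is exactly the set of elements of $\mathbb{F}_{p^r}$ whose polynomial representation has zero constant term, i.e. the elements of the form $x\cdot f(x)$ with $\deg f \le \frac{r}{m}-2$. The key step is then to understand multiplication by $\alpha^{i^*}$ where $i^* = \frac{p^r-1}{p^m-1}$. By Lemma 3, $\beta = \alpha^{i^*}$ generates the multiplicative group of the subfield $\mathbb{F}_{p^m}$, so $\alpha^{i^*}$ is a nonzero element of $\mathbb{F}_{p^m}$, and hence in the polynomial representation it is a degree-$0$ polynomial, i.e. a nonzero scalar $c\in\mathbb{F}_{p^m}$. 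Multiplying any $h'$ with polynomial representation $E_{h'}(x)$ by this scalar $c$ simply scales every coefficient, so the constant term of $E_{ch'}(x)$ is $c$ times the constant term of $E_{h'}(x)$; in particular it is zero if and only if the original constant term is zero. Therefore $c\cdot\ker(\psi^*)=\ker(\psi^*)$, that is $\alpha^{i^*}\cdot\ker(\psi^*)=\ker(\psi^*)$.

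This shows the $\alpha$-connected set generated by $\ker(\psi^*)$ has cardinality at most $i^*=\frac{p^r-1}{p^m-1}$, and Lemma 1 shows it is at least that, so equality holds. I would close by noting that, as argued in the proof of Lemma 2, the full list of kernels $\ker(\psi_0^*),\dots,\ker(\psi_{p^r-2}^*)$ with duplicates removed is precisely this set, completing the proof. The only mild subtlety — and the step I would be most careful about — is justifying that multiplication by an element of the subfield $\mathbb{F}_{p^m}$ really does act coefficient-wise on the polynomial representation over $\mathbb{F}_{p^m}$; this is immediate from the fact that the representation is as an $\mathbb{F}_{p^m}$-algebra (reduction modulo $\Pi(x)$, an $\mathbb{F}_{p^m}$-linear operation), so scalar multiplication commutes with taking constant terms. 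No genuine obstacle is expected beyond this bookkeeping.
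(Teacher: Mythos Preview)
Your proof is correct, but the route differs from the paper's in one respect: for the \emph{lower} bound on the cardinality you invoke Lemma~1 directly, whereas the paper re-derives it from scratch in this setting by a polynomial-degree argument. Specifically, the paper shows that if $\alpha^{i_m}\ker(\psi^*_0)=\ker(\psi^*_0)$ then $E_{\alpha^{i_m}}(x)$ must have degree~$0$ (otherwise some product $E_{\alpha^{i_m}}(x)E_{g_j}(x)$ has degree exactly $r/m$, forcing the quotient $K_j(x)$ to be a nonzero constant and hence $\gamma_j\notin\ker(\psi^*_0)$); this forces $\alpha^{i_m}\in\mathbb{F}_{p^m}$ and therefore $i_m\ge\frac{p^r-1}{p^m-1}$. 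For the \emph{upper} bound both you and the paper argue the same way: $\beta=\alpha^{(p^r-1)/(p^m-1)}\in\mathbb{F}_{p^m}$ acts as a scalar on the polynomial representation and hence preserves the zero-constant-term condition. Your version is more economical, since Lemma~1 already supplies the lower bound and there is no need to reprove it; the paper's version, on the other hand, yields the slightly stronger statement that $\alpha^i$ fixes $\ker(\psi^*)$ \emph{if and only if} $\alpha^i\in\mathbb{F}_{p^m}$, not merely that the smallest such $i$ is $\frac{p^r-1}{p^m-1}$.
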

\begin{proof}
$\psi^*~:~H'\rightarrow H$, where $H'=\{\mathbb{F}_{p^r}, +\}$ and $H=\{\mathbb{F}_{p^m}, +\}$. Since $|H|=p^m$, $\ker(\psi^*)$ is a subgroup of order $p^{r-m}$ in $H'$. From lemma 2, it is clear that kernels of $\psi_i$, where $\psi_i^*(h')=\psi^*(\alpha^{-i}h')$ form an $\alpha$-connected set. Cardinality of this set is equal to the minimum value of $i$ for which $\ker(\psi^*_i)=\alpha^i \ker(\psi^*_0)=\ker(\psi^*_0)$, which we denote with $i_m$.

Let $g_j\in \ker(\psi^*_0)$ and $\alpha^{i_m}g_j=\gamma_j$ ($j=1,...,p^{r-m}$). Let polynomial representations (in the extended representation) of $\alpha^{i_m},g_j$ and $\gamma_j$ be, respectively, $E_{\alpha^{i_m}}(x), E_{g_j}(x)$ and $E_{\gamma_j}(x)$. These are related as follows, where $K_j(x)$ is some polynomial over $\mathbb{F}_{p^m}$.
\[ E_{\alpha^{i_m}}(x)E_{g_j}(x) = \Pi(x)K_j(x) + E_{\gamma_j}(x)\]
Since $g_j\in \ker(\psi^*_0)$, constant term in $E_{g_j}(x)$ is zero, which makes the constant term in $E_{\alpha^{i_m}}(x)E_{g_j}(x)$ also zero. Note that for $\gamma_j \in \ker(\psi^*_0)$, constant term of $E_{\gamma_j}(x)$ should be zero. As observed earlier, $\Pi(x)$ has a non-zero constant term, and therefore, for $\gamma_j \in \ker(\psi)$, $K_j(x)$ should be a polynomial with a zero constant term. For $\alpha^{i_m} \ker(\psi^*_0)=\ker(\psi^*_0)$, this should be true for $j=1,..,p^{r-m}$.

Polynomial representations of elements in $\ker(\psi^*_0)$ contains at least one polynomial of each possible degree, from $0$ to $\frac{r}{m}-1$. Then, if $\deg(E_{\alpha^{i_m}}(x))>0$, for at least one value of $j$, $E_{\alpha^{i_m}}(x)E_{g_j}(x)$ would be of degree $\frac{r}{m}$. Since $\Pi(x)$ is also of degree $\frac{r}{m}$, this requires $K_j(x)$ to be a non-zero constant for that particular value of $j$, resulting in $\alpha^{i_m} \ker(\psi^*_0)\neq \ker(\psi^*_0)$. Therefore, for $\gamma_j \in \ker(\psi^*_0)$ for all $j=1,..,p^{r-m}$, $\deg(E_{\alpha^{i_m}}(x))=0$. In such a case, $K_j(x)=0$ for all $j$. This requires $\alpha^{i_m} \in \mathbb{F}_{p^m}$, and since we require the \textit{minimum},  $i_m=\frac{p^r-1}{p^m-1}$.

Thus, using homomorphism $\psi^*$ as in lemma 2, it is possible to construct an $\alpha$-connected set of additive subgroups of order $p^{r-m}$, that has the minimum cardinality $\frac{p^r-1}{p^m-1}$, as proved in Lemma 1.
\end{proof}

\section{Graph Expansion}

In this section, we will present how a graph over $\mathbb{F}_{p^r}$ can be expanded into a larger one over $\mathbb{F}_{p^m}$, where $m~|~r$, using the smallest set of $\alpha$-connected subgroups of order $p^{r-m}$ in $\mathbb{F}_{p^r}$, constructed as detailed in the previous section. We represent this special $\alpha$-connected set by $\Theta_{p^{r-m}}$ from here onwards. Basic mathematical concepts used in the expansion are briefly over-viewed in subsection A, while the expansion is presented in subsection B, along with an example.

\subsection{Preliminaries}
Consider some surjective homomorphism $\psi~:~H'\rightarrow H$, where $H'=\{ \mathbb{F}_{p^r},+ \}$ and $H=\{ \mathbb{F}_{p^m},+ \}$. As remarked earlier as well, $\ker(\psi)$ is a subgroup of $H'$, of order $p^{r-m}$. Since the homomorphism is surjective, according to the first isomorphism theorem \cite{book:rothman}, quotient group $Q_{\psi} = H'/\ker(\psi)$ is isomorphic to $H$. $Q_{\psi}$ contains the $p^m$ cosets of $\ker(\psi)$, including the trivial coset ($\ker(\psi)$ itself). In the isomorphism between $Q_{\psi}$ and $H$, this trivial coset would map to the identity element of $H$ (the additive identity of $\mathbb{F}_{p^m}$), and the other cosets would map to the remaining elements of $H$. 

Let $Q_{\psi} = \{ C^0_{\psi}, C^1_{\psi},..., C^{p^m-1}_{\psi}\}$, where each $C^j_{\psi}$ represents some coset of $\ker(\psi)$, with $C^0_{\psi}$ representing the trivial coset. Cosets contain elements in $\mathbb{F}_{p^r}$, and using the multiplicative properties of the field, we define a `multiplication' operation on $Q_{\psi}$ as follows.

\begin{definition}
Operation $\beta Q_{\psi}$, for some $\beta \in \mathbb{F}_{p^r}$, is defined as $\beta Q_{\psi} = \{ \beta C^0_{\psi},\beta C^1_{\psi}, ..., \beta C^{p^m-1}_{\psi}\}$. 
\end{definition}

Given that two subgroups in $H'$ are $\alpha$-connected, then the respective quotient groups are also related in a similar way, as shown in the following lemma.

\begin{lemma}
Let $G_{\psi_1}$ and $G_{\psi_2}$ be two $\alpha$-connected subgroups of $H'$, with $G_{\psi_1}=\alpha^k G_{\psi_2}$. Also, $Q_{\psi_1}=H'/G_{\psi_1}$ and $Q_{\psi_2}=H'/G_{\psi_2}$ are the two quotient groups. Then, $\alpha^k Q_{\psi_2}$ is a \textit{permutation} of $Q_{\psi_1}$.
\end{lemma}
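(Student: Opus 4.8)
The plan is to show that ``multiplication by $\alpha^k$'' (Definition 7) carries each coset in $Q_{\psi_2}$ to a coset in $Q_{\psi_1}$, and that this assignment is a bijection between the two $p^m$-element coset sets; a bijection of a finite set onto another finite set of the same cardinality is precisely a permutation (reordering) of its elements, which is what the statement asserts.

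First I would take an arbitrary element $C^j_{\psi_2}\in Q_{\psi_2}$ and write it as $C^j_{\psi_2}=a_j+G_{\psi_2}$ for some representative $a_j\in\mathbb{F}_{p^r}$. By distributivity of field multiplication over addition, $\alpha^k C^j_{\psi_2}=\alpha^k a_j+\alpha^k G_{\psi_2}$. Since by hypothesis $G_{\psi_1}=\alpha^k G_{\psi_2}$, this equals $\alpha^k a_j+G_{\psi_1}$, which is a coset of $G_{\psi_1}$ and hence an element of $Q_{\psi_1}$. Thus every entry of the tuple $\alpha^k Q_{\psi_2}$ lies in $Q_{\psi_1}$.

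Next I would establish injectivity. As $\alpha$ is a primitive element, $\alpha^k\neq 0$ is a unit, so $x\mapsto\alpha^k x$ is a bijection of $\mathbb{F}_{p^r}$ onto itself with inverse $x\mapsto\alpha^{-k}x$. Hence if $\alpha^k C^{j_1}_{\psi_2}=\alpha^k C^{j_2}_{\psi_2}$ as subsets of $\mathbb{F}_{p^r}$, applying $\alpha^{-k}$ to both sides gives $C^{j_1}_{\psi_2}=C^{j_2}_{\psi_2}$, i.e. $j_1=j_2$. So the $p^m$ cosets $\alpha^k C^0_{\psi_2},\ldots,\alpha^k C^{p^m-1}_{\psi_2}$ are pairwise distinct elements of $Q_{\psi_1}$. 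Since $|Q_{\psi_1}|=|H'/G_{\psi_1}|=p^m=|Q_{\psi_2}|$, this list exhausts $Q_{\psi_1}$, and therefore $\alpha^k Q_{\psi_2}$ is a permutation of $Q_{\psi_1}$.

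I do not expect a genuine obstacle here; the only points needing care are to keep the distinction between $\alpha^k Q_{\psi_2}$ as an \emph{ordered} tuple (whose entries happen to be cosets of $G_{\psi_1}$) and $Q_{\psi_1}$ as the set of all such cosets, so that the word ``permutation'' is justified by the counting argument rather than by mere set equality, and to note explicitly that the hypothesis $\alpha^k G_{\psi_2}=G_{\psi_1}$ is exactly what guarantees the images are cosets of the \emph{correct} subgroup (otherwise they would be cosets of $\alpha^k G_{\psi_2}$ instead).
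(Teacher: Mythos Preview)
Your proposal is correct and follows essentially the same route as the paper's proof: write each coset of $G_{\psi_2}$ with a representative, multiply by $\alpha^k$, use the hypothesis $G_{\psi_1}=\alpha^k G_{\psi_2}$ to see the images are cosets of $G_{\psi_1}$, and then argue distinctness plus cardinality to get a bijection. Your injectivity step via the inverse $\alpha^{-k}$ is slightly more explicit than the paper's appeal to ``multiplicative properties of the field,'' and the paper additionally remarks that the trivial coset is fixed, but neither of these is a substantive difference.
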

\begin{proof}
Let $Q_{\psi_1} = \{ C^0_{\psi_1},...,C^{p^m-1}_{\psi_1} \}$ and $Q_{\psi_2} = \{ C^0_{\psi_1},...,C^{p^m-1}_{\psi_1} \}$. Here the trivial cosets $C^0_{\psi_j}$ are the subgroups themselves, and all the cosets can be represented using the respective subgroup and some coset leader term as follows:
\[ Q_{\psi_1} = \{ G_{\psi_1},G_{\psi_1}+l^1_{\psi_1},...,G_{\psi_1}+l^{p^m-1}_{\psi_1} \} \]
\[ Q_{\psi_2} = \{ G_{\psi_2},G_{\psi_2}+l^1_{\psi_2},...,G_{\psi_1}+l^{p^m-1}_{\psi_2} \} \]

Using the multiplication operation on $Q_{\psi_2}$ yields
\[ \alpha^k Q_{\psi_2} = \{ \alpha^k G_{\psi_2},\alpha^k G_{\psi_2}+\alpha^k l^1_{\psi_2},...,\alpha^k G_{\psi_2}+\alpha^k l^{p^m-1}_{\psi_2} \} \]

As cosets of any subgroup are mutually exclusive, and due to the multiplicative properties of the field, any $\alpha^k G_{\psi_2} + \alpha^k l^j_{\psi_2}$ is disjoint with any other. Since $G_{\psi_1}=\alpha^k G_{\psi_2}$ then
\[ \alpha^k Q_{\psi_2} = \{ G_{\psi_1},G_{\psi_1}+\alpha^k l^1_{\psi_2},...,G_{\psi_1}+\alpha^k l^{p^m-1}_{\psi_2} \} \]

$\alpha^k Q_{\psi_2}$ is a set containing $G_{\psi_1}$, and its $(p^m-1)$ proper cosets, albeit the coset leader terms could have changed. Thus, $Q_{\psi_1}$ and $\alpha^k Q_{\psi_2}$ are the \textit{same sets}. Comparing the original representation of $Q_{\psi_1}$ with $\alpha^k Q_{\psi_2}$, it is apparent that the position of the trivial coset is not changed, but there is no such guarantee for other cosets. Thus, when elements of the quotient groups are considered in some specific order, then $\alpha^k Q_{\psi_2}$ is some permutaion of $Q_{\psi_1}$.
\end{proof}

The homomorphisms $\psi^*_i$ we use in constructing the smallest $\alpha$-connected set, $\Theta_{p^{r-m}}$, are all surjective. $\Theta_{p^{r-m}}$ consists of the kernels of these homomorphisms, and it is possible to construct a \textit{set of quotient groups} with those kernels. Let that set be $\Theta^Q_{p^{r-m}} = \{ Q_{\psi^*_i}~;~i=0,...,\frac{p^r-1}{p^m-1}-1\}$. Since all $\psi^*_i$'s are surjective, each $Q_{\psi^*_i}$ is isomorphic to $H$, the additive group of $\mathbb{F}_{p^r}$. Since $\Theta_{p^{r-m}}$ is $\alpha$-connected, according to Lemma 5, multiplying some $Q_{\psi^*_i}\in \Theta^Q_{p^{r-m}}$ by some power of $\alpha$ results in a permutation of some other $Q_{\psi^*_j}\in \Theta^Q_{p^{r-m}}$.


These observations about $\Theta^Q_{p^{r-m}}$ provide some insights on how to decode a code over $\mathbb{F}_{p^r}$ over one of its subfields $\mathbb{F}_{p^m}$. Instead of traditionally used symbol probabilities, we consider the probabilities of a variable node \textit{belonging} to each coset of each quotient group in $\Theta^Q_{p^{r-m}}$. Then, for each variable node, $\frac{p^r-1}{p^m-1}$ probability vectors of length $p^m$ are required, which we refer to as \textit{`coset probability vectors (CPVs)'}. Complexity bottleneck in decoding NB-LDPC codes are the check node operations \cite{ems:declerq-fossorier}-\cite{minmax:savin}, and advantages of our approach become apparent when the impact on that step is assessed.

Check node operations in decoding NB-LDPC codes consist of two major sub-steps: permutation and convolution of probability vectors \cite{qspa-fft:declerq}. In the permutation sub-step, the simpler one of the two, symbol probability vectors received by the check node are permuted, where the permutations are defined by the respective edge weights. Since $\Theta^Q_{p^{r-m}}$ is constructed using the smallest $\alpha$-connected set $\Theta_{p^{r-m}}$, it is clear from Lemma 5 that CPVs will also have to be permuted similarly. Thus, complexity of the permutation step will not be significantly affected in the proposed approach.

In order to understand how our approach changes the convolution sub-step, consider the simple case of a degree 3 check node in a code over $\mathbb{F}_{p^{r}}$, where the parity-check equation is $v_1+v_2+v_3=0$. A convolution has to be carried out using the incoming symbol probability vectors of $v_1$ and $v_2$ for computing the outgoing symbol probability vector of $v_3$, $\bf{p}^s_{v_3}$. Since these vectors are of length $r$, convolution will be of complexity order $\mathcal{O}(p^{2r})$. Now assume we have to compute some $i$'th CPV of $v_3$, $\bf{p}^c_{v_3,i}$. This computation also only requires the incoming $i$'th CPVs of the remaining two variable nodes. Note that these are of length $m$, where $m~|~r$. As all quotient groups in $\Theta^Q_{p^{r-m}}$ are isomorphic to the additive group of $\mathbb{F}_{p^m}$, computation of $\bf{p}^c_{v_3,i}$ should be the same as the convolution sub-step at a check node of a code over $\mathbb{F}_{p^m}$. Thus, complexity is now only of order $\mathcal{O}(p^{2m})$. However, with $|\Theta^Q_{p^{r-m}}|=\frac{p^r-1}{p^m-1}$, that many CPVs will have to be computed, resulting in an overall complexity of $p^{2m}\times \frac{p^r-1}{p^m-1} \approx \mathcal{O}(p^{m+r})$. Nevertheless, particularly for the cases where $m\ll r$, this is a significant reduction of complexity.

Motivated by the observation that using coset probability vectors instead of symbol probability vectors can allow faster decoding of NB-LDPC codes, we will provide a more detailed analysis of these complexity advantages in Section VI. In the following subsection, we explain how to \textit{expand} a graph over $\mathbb{F}_{p^r}$ into one over $\mathbb{F}_{p^m}$ so that CPVs can be used in decoding. 

\subsection{Graph Expansion}
We assume that a Tanner graph of a code over $\mathbb{F}_{p^r}$ is to be expanded into a graph over $\mathbb{F}_{p^m}$, where $m~|~r$. The set of quotient groups, $\Theta^Q_{p^{r-m}}$, will be of cardinality $\frac{p^r-1}{p^m-1}$. Each $Q_i\in \Theta^Q_{p^{r-m}}$ is isomorphic to $\{\mathbb{F}_{p^m},+\}$, and in decoding, an associated CPV has to be used. Observations on how CPVs impact decoding suggest that it is possible to simply replace each node in the original graph, i.e., the so-called $\mathbb{F}_{p^r}$ nodes, with $\frac{p^r-1}{p^m-1}$ $\mathbb{F}_{p^m}$ nodes. Each variable node over $\mathbb{F}_{p^m}$ would represent some CPV, and check nodes would calculate their estimates. How the set of $\mathbb{F}_{p^m}$ variable and check nodes of a single neighboring variable-check node pair of the original graph are connected will depend on the original edge weight, as evident from Lemma 5.

Consider a check node and a variable node in the original graph, connected with an edge of weight $\alpha^k\in \mathbb{F}_{p^r}$. According to Lemma 5, $Q_i \in \Theta^Q_{p^{r-m}}$ becomes a permutation of some $Q_j\in \Theta^Q_{p^{r-m}}$ when multiplied with $\alpha^k$. Then, in the expansion, the $\mathbb{F}_{p^m}$ variable node representing the $i$'th CPV should be connected to the $\mathbb{F}_{p^m}$ check node calculating estimates of the $j$'th CPV. As $Q_i$ turns into a \textit{permutation} of $Q_j$, CPVs transmitted along this edge is permuted as well. Thus, this is a  2-step process, where first the set of CPVs are permuted, and then each CPV is permuted within itself. From the point-of-view of expansion, it is equivalent to connecting the set of $\mathbb{F}_{p^m}$ variable nodes with the set of check nodes using edges labeled with elements from $\mathbb{F}_{p^m}$.

As an example, consider parity-check equation $\rho$ from a code over $\mathbb{F}_{2^4}$, where $\alpha$ denotes a primitive of the field.
\[ \rho \Rightarrow \alpha^4 v_1 + \alpha v_2 = 0 \tag{3} \]
Fig. 1 presents the initial expansion for $\rho$. The shaded graph is the original Tanner graph over $\mathbb{F}_{2^4}$, and the graph beneath is the expansion over $\mathbb{F}_{2^2}$. In both graphs, circles denote variable nodes and squares denote check nodes. Note that $\omega$ is a primitive of $\mathbb{F}_{2^2}$ and that edges in the expanded graph are labeled with $\mathbb{F}_{2^2}$ elements.

\begin{figure}[ht]
\centering
\includegraphics[scale=0.15]{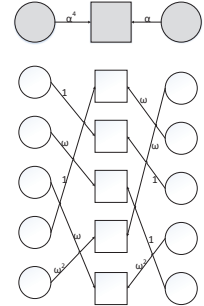}
\captionof{figure}{Initial Expansion}
\end{figure}

\vspace{-4mm}
Since each $Q_i\in \Theta^Q_{p^{r-m}}$ contains different groupings of the same set of symbols, the associated CPV contains some information about all other CPVs. Unfortunately, initial graph expansion is unable to capture these dependencies. In order to clearly visualize the relationships between CPVs, we propose an \textit{alternate representation} of $\mathbb{F}_{p^r}$ symbols below.

As each $Q_i\in \Theta^Q_{p^{r-m}}$ is isomorphic to $H=\{ \mathbb{F}_{p^m},+\}$, every coset in $Q_i$ maps to some element of $H$. We define the \textit{value} of some $\gamma \in \mathbb{F}_{p^{r}}$ \textit{with respect to} some $Q_i\in \Theta^Q_{p^{r-m}}$ as the element of $H$ that maps to the coset containing $\gamma$. Since $|\Theta^Q_{p^{r-m}}|=\frac{p^r-1}{p^m-1}$, using values defined with respect to each $Q_i$, $\gamma$ can be uniquely represented as a vector of $\frac{p^r-1}{p^m-1}$ elements of $H$. For example, consider the case of $\mathbb{F}_{2^4}$ and $\mathbb{F}_{2^2}$. Table I presents the $\frac{2^4-1}{2^2-1}=5$ quotient groups in $\Theta^Q_{2^{4-2}}$, where each coset is listed under the $H_{2^2}=\{\mathbb{F}_{2^2},+\}$ element we map to it in the isomorphism between its quotient group and $H_{2^2}$. Alternative representations of $\mathbb{F}_{2^4}$ elements as length $5$ vectors over $\mathbb{F}_{2^2}$ are listed in Table II. Note that a position $i$ in these vectors map to quotient group $Q_i$ as given in Table I.


\begin{table}[ht]
    \centering
    \setlength\tabcolsep{4pt}
    \begin{tabular}{|c|c|c|c|c|}
        \hline
        & $0$ & $1$ & $\omega$ & $\omega^2$ \\
        \hline
        $Q_0$ & $0,\alpha,\alpha^6,\alpha^{11}$ & $1,\alpha^4,\alpha^{12},\alpha^{13}$ & $\alpha^2,\alpha^3,\alpha^5,\alpha^9$ & $\alpha^7,\alpha^8,\alpha^{10},\alpha^{14}$ \\
        $Q_1$ & $0,1,\alpha^5,\alpha^{10}$ & $\alpha,\alpha^2,\alpha^4,\alpha^8$ & $\alpha^6,\alpha^7,\alpha^9,\alpha^{13}$ & $\alpha^3,\alpha^{11},\alpha^{12},\alpha^{14}$ \\
        $Q_2$ & $0,\alpha^4,\alpha^9,\alpha^{14}$ & $1,\alpha,\alpha^3,\alpha^7$ & $\alpha^5,\alpha^6,\alpha^8,\alpha^{12}$ & $\alpha^2,\alpha^{10},\alpha^{11},\alpha^{13}$\\
        $Q_3$ & $0,\alpha^2,\alpha^7,\alpha^{12}$ & $1,\alpha^8,\alpha^9,\alpha^{11}$ & $\alpha,\alpha^5,\alpha^{13},\alpha^{14}$ & $\alpha^3,\alpha^4,\alpha^6,\alpha^{10}$\\
        $Q_4$ & $0,\alpha^3,\alpha^8,\alpha^{13}$ & $1,\alpha^2,\alpha^6,\alpha^{14}$ & $\alpha^4,\alpha^5,\alpha^7,\alpha^{11}$ & $\alpha,\alpha^9,\alpha^{10},\alpha^{12}$\\
         \hline
    \end{tabular}
    \vspace{2mm}
    \caption{Quotient Groups in $\Theta^Q_{2^{4-2}}$}
\end{table}

\vspace{-12pt}

\begin{table}[ht]
    \centering
    \begin{tabular}{|c|c|c|c|}
         \hline
         $0 : 0 0 0 0 0$ & $\alpha^3 : \omega \omega^2 1 \omega^2 0$ & $\alpha^7 : \omega^2 \omega 1 0 \omega$ & $\alpha^{11} : 0 \omega^2 \omega^2 1 \omega$\\
         $1 : 1 0 1 1 1$ & $\alpha^4 : 1 1 0 \omega^2 \omega$ & $\alpha^8 : \omega^2 1 \omega 1 0$ & $\alpha^{12} : 1 \omega^2 \omega 0 \omega^2$ \\
         $\alpha : 0 1 1 \omega \omega^2$ & $\alpha^5 : \omega 0 \omega \omega \omega$ & $\alpha^9 : \omega \omega 0 1 \omega^2$ & $\alpha^{13} : 1 \omega \omega^2 \omega 0$\\
         $\alpha^2 : \omega 1 \omega^2 0 1$ &  $\alpha^6: 0 \omega \omega \omega^2 1$ & $\alpha^{10} : \omega^2 0 \omega^2 \omega^2 \omega^2$ & $\alpha^{14} : \omega^2 \omega^2 0 \omega 1$ \\
         \hline
    \end{tabular}
    \vspace{2mm}
    \caption{Alternate Representations of Symbols in $\mathbb{F}_{2^4}$}
\end{table}

\vspace{-4mm}

Note that the sixteen vectors in Table II form a 2-dimensional space over $\mathbb{F}_{2^2}$. In channel coding terms, they are the 16 codewords of a $(2,5)$ linear code over $\mathbb{F}_{2^2}$. Thus, values of some $\gamma \in \mathbb{F}_{2^4}$ with respect to 2 $Q_i$'s in $\Theta^Q_{2^{4-2}}$ are sufficient to derive the remaining three. The dependancies could easily be captured through the parity-check equations of the code.

In the general case of $\mathbb{F}_{p^r}$ and $\mathbb{F}_{p^m}$, alternate representation vectors would form a $\frac{r}{m}$ dimensional vector space, or in other words a $(\frac{p^r-1}{p^m-1},\frac{r}{m})$ code, over $\mathbb{F}_{p^m}$. The $\frac{p^r-1}{p^m-1}$ $\mathbb{F}_{p^m}$ nodes of every $\mathbb{F}_{p^r}$ variable node would form this code, and since each such instance only involves the set of $\mathbb{F}_{p^m}$ nodes of a single $\mathbb{F}_{p^r}$ variable node, we refer to it as the \textit{`local code'}. We propose using the parity-check matrix (PCM) of the local code, $^p\mathbb{H}^{r,m}_L$, to succinctly represent the dependancies between CPVs. Note that codes with parameters of the form $(\frac{p^r-1}{p^m-1},\frac{r}{m})$ are from the family of non-binary simplex codes. Since dual of such a code is the $(\frac{p^r-1}{p^m-1},\frac{p^r-1}{p^m-1}-\frac{r}{m})$ Hamming code over $\mathbb{F}_{p^m}$ \cite{book:lin}, parity-check equations would be Hamming codewords, and the PCM would contain $\frac{p^r-1}{p^m-1}-\frac{r}{m}$ of those. As an example, the `local' PCM for the case of $\mathbb{F}_{2^4}$ and $\mathbb{F}_{2^2}$, $^2\mathbb{H}_{L}^{4,2}$, which consists of 3 codewords of the $(3,5)$ Hamming code over $\mathbb{F}_{2^2}$, is given below.
{{
\[ ^2\mathbb{H}_{L}^{4,2} = \begin{bmatrix}
    1\hspace{0.1cm} 1\hspace{0.1cm} 1\hspace{0.1cm} 0\hspace{0.1cm} 0\\
    1\hspace{0.1cm} \omega\hspace{0.1cm} 0\hspace{0.1cm} 1\hspace{0.1cm} 0\\
    1\hspace{0.1cm} \omega^2\hspace{0.1cm} 0\hspace{0.1cm} 0\hspace{0.1cm} 1\\
\end{bmatrix} \tag{4}\] }}

From the perspective of the expansion, parity-check equations due to the local code are a set of additional check nodes, which have to be added to the expanded graph. Since the set of $\mathbb{F}_{p^m}$ nodes of every $\mathbb{F}_{p^r}$ variable node forms one instance of the local code, $\frac{p^r-1}{p^m-1}-\frac{r}{m}$ additional check nodes representing the local PCM have to be added \textit{per variable node of the original graph}. Adding such a large number of check nodes might seem to increase complexity, but it should be noted that these new nodes are of low degrees. Since the dual is a Hamming code, it should always be possible to find a set of degree 3 parity-check equations for the local PCM. With each new check node only being connected with a subset of the $\frac{p^r-1}{p^m-1}$ nodes of one $\mathbb{F}_{p^r}$ variable node, they will be referred to as `local check nodes' here onward. Check nodes resulting from expanding $\mathbb{F}_{p^r}$ nodes will be called `regular check nodes'.

Performance with any form of iterative message passing decoding is dependant on features of the graph used, and it is well-known that short cycles in the graph negatively impact decoding. While with binary codes, where the edge labels are all $1$, effects of cycles depend only on their length, edge labels themselves have an impact in the non-binary case \cite{absorbing-sets:amiri}. Particularly troublesome there are the cycles created by sub-matrices in the PCM that are not of full-rank \cite{absorbing-sets:amiri}-\cite{cycle-counting:cho}. Decoding performance of the expanded graph may be improved if the subgraph induced by the local PCM is free of these undesirable structures as much as possible. Although one could use a canonical generator matrix of a Hamming code as the local PCM, the graph induced may not entirely suit iterative decoding. In such a scenario, raw operations can be carried out on the matrix until a `better' one is obtained. This is particularly important when the expansion results in a \textit{binary} graph ($p=2$ and $m=1$), since then \textit{any} short cycle is detrimental for decoding. We have explored this case separately in \cite{cpbd}, where the expansion is arrived at in a slightly different way than the more general approach presented here. In the non-binary case ($m>1$), it might not be possible to remove all short cycles, and one might have to be satisfied with a local PCM only free of short cycles not satisfying the `full-rank condition', such as $^2\mathbb{H}^{4,2}_L$ given in (4). Decoding scheme we propose in the next section employs a technique to further reduce the possible negative effects of cycles among local check nodes.


Local check nodes enable us to adequately capture the various dependancies between CPVs, and adding those to the expanded graph wraps up the expansion. Different steps necessary for expanding a graph over $\mathbb{F}_{p^r}$ into one over $\mathbb{F}_{p^m}$, where $m~|~r$, can be summarized as follows.
\begin{enumerate}
    \item {
        Obtain the smallest set of $\alpha$-connected subgroups $\Theta_{p^{r-m}}$, using the homomorphism presented in definition 3, and following the steps outlined in lemma 4. Use that to derive the set of quotient groups $\Theta^Q_{p^{r-m}}$.
    }
    \item {
        Map cosets of each $Q_i \in \Theta^Q_{p^{r-m}}$ with elements of $H=\{ \mathbb{F}_{p^m},+ \}$. Use these isomorphisms to obtain the alternate representation vectors of $\mathbb{F}_{p^r}$ elements.
    }
    \item {
        Find a PCM more suited to iterative decoding for the code formed by alternate representation vectors. 
    }
    \item {
        Expand each node in the original graph into $\frac{p^r-1}{p^m-1}$ $\mathbb{F}_{p^m}$ nodes. Connect the new variable and check nodes and label the edges, based on edge labels in the original graph.
    }
    \item {
        Add local check nodes to represent the local PCM found in step 3.
    }
\end{enumerate}

Fig. 2 presents the complete expansion for the earlier example of parity-check equation $\rho$, given by (3). $\mathbb{F}_{2^4}$ graph is shaded grey, and the expansion to $\mathbb{F}_{2^2}$ is depicted in white. Circles represent variable nodes, squares regular check nodes, and hexagons local check nodes. Note that in the interest of a clearer figure, edge labels are only shown for the instances where they are $\neq1$.

\begin{figure}[ht]
\centering
\includegraphics[scale=0.1]{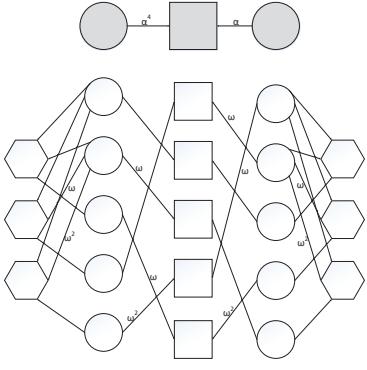}
\captionof{figure}{Final Expansion}
\end{figure}

\vspace{-6mm}

\section{Decoding Scheme}

An iterative message passing decoding algorithm that utilizes the Tanner graph representation of NB-LDPC codes can be used with the expanded graph. Advantage herein is the expansion being over a smaller field than the original graph, leading to a lower decoding complexity. Note that a few different options are available for expanding a graph over $\mathbb{F}_{p^r}$, one for each factor of $r$. Each of these would offer a different complexity-performance trade-off, which may suit different applications. 

Any generic decoding algorithm can be applied straightforwardly to decode the expanded graph with some simple modifications. In the following, we present these modifications, and explain why they are required. Note that the explanation is from the perspective of a soft decision decoding (SDD) algorithm, such as QSPA \cite{nb-ldpc:davey-mckay}, and its many variations \cite{qspa-fft:declerq}-\cite{trellis-ems:li}, but can be also applied to other algorithms such as majority-logic decoding \cite{mlgd:lin}. \\

1) \textit{Computing Channel Estimates}: Any SDD algorithm has to be initialized with probability estimates based on channel observations. In QSPA and its variants, for initializing the decoder, variable nodes compute channel estimates that are of the form of symbol probability vectors. In the proposed expansion, each variable node represents some CPV. Thus, when using SDD algorithms on expanded graphs, it is required to compute initial estimates for CPVs.

Note that each coset contains a subset of elements in $\mathbb{F}_{p^r}$. This makes computing initial estimates of CPVs quite straight-forward, i.e., probability of a ($\mathbb{F}_{p^r}$) variable node belonging to a particular coset of some subgroup can be calculated by simply summing up probabilities of those symbols that belong to the coset. Equation (5) presents this computation, where $\bf{p}^s_n$ is the symbol probability vector of original variable node $n$, $\bf{p}^c_{n,i}$ is the $i$'th CPV of that node, $C^j_i$ is the $j$'th coset in $i$'th quotient group, and $a^j_k$'s are $\mathbb{F}_{p^r}$ elements in that coset.
\[ \bf{p}^c_{n,i}(j)=\sum_{a^j_k\in C^j_i} \bf{p}^s_n(a^j_k) \hspace{1cm} j=0,1,..,(p^m-1) \tag{5} \]

In most practical applications, decoders operate on either $\log$ or $\log$-likelihood ratio (LLR) domain, due to hardware stability concerns \cite{hw-decoders:spagnol}. In such a case, $\bf{p}^c_{n,i}$ has to be converted to the desired domain, for example 
\[ \underline{L}_{n,i}^c(j)=\log \frac{\bf{p}^c_{n,i}(j)}{\bf{p}^c_{n,i}(0)} \hspace{1cm} j=0,1,..,(p^m-1) \tag{6}\]

For one $\mathbb{F}_{p^r}$ variable node, $\frac{p^r-1}{p^m-1}$ $\mathbb{F}_{p^m}$ nodes that represent CPVs have to be initialized as in (5). Only a single symbol probability vector, corresponding to the single $\mathbb{F}_{p^r}$ element transmitted through the channel, will be used for all those computations. This implies that channel observation is only sufficient to initialize $\frac{r}{m}$ $\mathbb{F}_{p^m}$ symbols. However, in this approach, there are $\frac{p^r-1}{p^m-1}$ nodes that are initialized. Thus, channel observations are duplicated and dependencies are created between initial estimates of CPVs. Any error in channel estimates gets multiplied, and propagates through the graph, leading to performance losses.

Recall the fact that the set of $\mathbb{F}_{p^m}$ nodes of a single $\mathbb{F}_{p^r}$ variable node are `connected' via the local code. Local code is an $(\frac{p^r-1}{p^m-1},\frac{r}{m})$ code, and therefore, $\frac{r}{m}$ out of the $\frac{p^r-1}{p^m-1}$ $\mathbb{F}_{p^m}$ nodes can be thought of as representing \textit{information symbols}, and others \textit{parity symbols}. We propose first picking a suitable set of $\frac{r}{m}$ nodes to represent information symbols, and initializing only these as in (5) and (6). For rest of the nodes, those that represent parity symbols of the local code, an additional scaling factor $\delta$ ($0\leq \delta \leq 1$) will be used in (6). Our simulation results show that this modification helps in reducing propagation of errors in channel information, but $\delta$ has to be optimized per code. Equation (7) presents this modification. 
\[ \underline{L}_{n,i}^c(j)=\delta . \log \frac{\bf{p}^c_{n,i}(j)}{\bf{p}^c_{n,i}(0)} \hspace{1cm} j=0,1,..,(p^m-1) \tag{7} \]

After initialization, operations of the decoder would be similar to those of a decoder for a code over $\mathbb{F}_{p^m}$ except for a couple of minor modifications that are explained in the following. \\

2) \textit{Distinguishing Local Checks from Regular Checks:} 
Expanded graphs contain two different types of check nodes; local check nodes that represent dependencies between CPVs, and regular ones, resulting from expanding check nodes in the original graph. Here, local check nodes are only connected with $\mathbb{F}_{p^m}$ nodes of a single $\mathbb{F}_{p^r}$ variable node, whereas a regular check node will only be connected with one such. Thus, local check nodes do not represent relationships between different variables of the original code, and regular ones represent only those. This means that estimates from the two types of check nodes are based on two separate linear codes, and treating them similarly may not be the best approach to take.

As discussed in Section III $B$, local PCM may contain some short cycles, and these would be present in the expanded graph among the local check nodes. Estimates computed by a check node involved in one such cycle in two different iterations will be correlated with each other to some degree. This can make the estimates `over-confident' of a variable node taking a particular value.

Taking into consideration the need to distinguish between estimates of local and regular check nodes, and also since local check nodes could be involved in short cycles, we propose using another scaling factor $\psi$ (0$<$$\psi$$<$1) with estimates of local check nodes.  In the literature, similar approaches have been taken to mitigate effects of short cycles with satisfactory results, for example in \cite{rs:jiang}.

Combining probability estimates with this modification, at some variable node $i$ of the expanded graph during $k$'th decoding iteration, is given by (8). There, $\underline{L}_i$ is the initial estimate for node $i$, $\underline{R}^{(k)}_i$ is the combined estimate, and $\underline{r}^{(k)}_{j\xrightarrow{}i}$ is the estimate sent from $j$'th check node to $i$'th variable node, in $k$'th iteration. $\underline{L}_i, \underline{R}^{(k)}_i$ and  $\underline{r}^{(k)}_{j\xrightarrow{}i}$ are all length $p^m$ vectors of $\log$ or LLR values. $N^r_i$ and $N^l_i$ are, respectively, sets of regular and local check nodes in the neighborhood of node $i$.
\[ \underline{R}^{(k)}_i = \underline{L}_i + \sum_{j\in N^r_i} \underline{r}^{(k)}_{j\xrightarrow{}i} + \psi . \sum_{j\in N^l_i} \underline{r}^{(k)}_{j\xrightarrow{}i} \tag{8} \]

Similar to scaling factor $\delta$ used in initialization, $\psi$ also has to be optimized per code.\\

3) \textit{Testing for Convergence:} In iterative decoding of NB-LDPC codes, a tentative decision is taken by every variable node in each iteration to test whether the decoder has converged to a valid codeword. If so, then the check-sum at every check node should be zero, and the decoding process can be terminated. Same approach may be taken when decoding on expanded graphs. Tentative decision at each variable node would be the $\mathbb{F}_{p^m}$ element most likely for the node, and check-sums would be computed at all check nodes, including local ones. Output of the decoder would be a vector of $\mathbb{F}_{p^m}$ elements that's $\frac{p^r-1}{p^m-1}$ times longer than the original code length. Original codeword can be recovered by mapping each set of $\frac{p^r-1}{p^m-1}$ $\mathbb{F}_{p^m}$ elements to a single $\mathbb{F}_{p^r}$ element, via the `local' code, as discussed in Section III $B$.

Even though we replace each $\mathbb{F}_{p^r}$ node with $\frac{p^r-1}{p^m-1}$ $\mathbb{F}_{p^m}$ nodes, just $\frac{r}{m}$ $\mathbb{F}_{p^m}$ elements are sufficient to represent a single $\mathbb{F}_{p^r}$ element, which is also evident from the local code. This observation leads to a slightly easier approach to checking convergence. Rather than deciding on all $\mathbb{F}_{p^m}$ nodes of a single $\mathbb{F}_{p^r}$ variable node, we propose only using the $\frac{r}{m}$ nodes selected as the `information symbols' of the local code. Most likely $\mathbb{F}_{p^m}$ elements of these would map to a single $\mathbb{F}_{p^r}$ element, once more through the local code. Check-sums of original parity-check equations can then be computed with these $\mathbb{F}_{p^r}$ elements. Note that even though now check-sums are computed over the larger field, computations involve only simple field arithmetic, and also there will be a significant reduction in the number of computations required when compared with the straight-forward approach. \\

With these three modifications, any iterative soft-decoding algorithm \cite{nb-ldpc:davey-mckay}-\cite{trellis-ems:li} proposed for NB-LDPC codes may be used with expanded graphs. This allows a large number of decoding strategies. For applications where decoding latency is the primary concern, a simplification of QSPA, such as min-max decoding \cite{minmax:savin}, can be used with an expanded graph, thereby achieving the complexity gains of both the simplification and the expansion. Section V presents some results from simulations where a few of these different strategies were evaluated.


\section{Simulation Results}
In this section, we compare error-correcting performance of decoding schemes discussed in Section IV against some existing decoding algorithms for NB-LDPC codes. We consider different expansions of the same Tanner graph (different $m$ for a fixed graph), and use QSPA \cite{nb-ldpc:davey-mckay}, and one of its well-known simplifications, min-max decoding \cite{minmax:savin}, with each expansion. QSPA and min-max decoding are also used on the original graph, along with max-log-SP algorithm \cite{qspa-llr:wymeersch}, which is a special case of the extended min-sum (EMS) algorithm \cite{ems:declerq-fossorier}, where $n_m$ and $n_c$ are set to the maximum possible values of the size of the field and check-node degree, respectively. All algorithms were implemented in LLR domain \cite{qspa-llr:wymeersch}, and simulations were done over the BI-AWGN channel, with maximum decoding iterations of 50 for all. Algorithms over expanded graphs were used with the modifications proposed in Section IV, and scaling factors $\delta$ and $\psi$ were optimized through simulations. In the following, we use the algorithm along with the field size to refer to different decoding setups, for example, we let $\mathbb{F}_{p^r}$-QSPA denote QSPA on a graph over $\mathbb{F}_{p^r}$, and etc.

\begin{figure}[ht]
\centering
\includegraphics[scale=0.5]{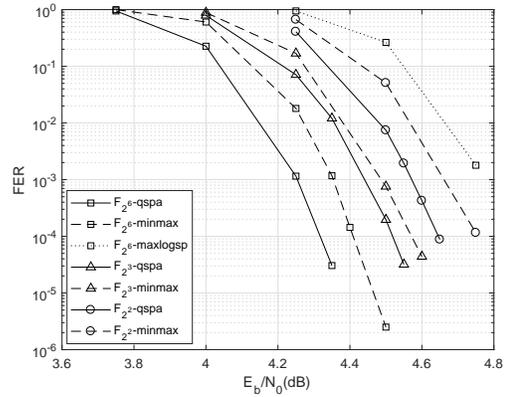}
\captionof{figure}{FER Perf. with a (1998,1776) code over $GF(2^6)$ (\( \mathcal{C}_1 \))}
\end{figure}

Fig.~3 shows FER performance of decoding schemes with \( \mathcal{C}_1 \), a rate $0.89$ code over $\mathcal{F}_{2^6}$, of 1998 symbols in length. Code was generated through random re-labeling of a regular binary LDPC code of column weight 4, obtained from \cite{database:mckay}. 

In Fig.~3, we observe that decoding algorithms over expanded graphs perform close to the best known decoder, QSPA over the original graph. In fact, QSPA over the $\mathbb{F}_{2^3}$ expansion performs within $0.2$dB of $\mathbb{F}_{2^6}$-QSPA, at a FER of $10^{-4}$. When using the $\mathbb{F}_{2^2}$ expansion, this widens slightly to $0.3$dB. While min-max decoding over the original graph has a gap of only about $0.08$dB with $\mathbb{F}_{2^6}$-QSPA, it should be noted that decoding is still over $\mathbb{F}_{2^6}$, and thus, it is more complex than QSPA over expanded graphs, as made evident in Section VI. Interestingly, other simplification of QSPA, max-log-SP algorithm, is outperformed by all proposed decoding schemes, although it operates in the original field. Max-log-SP shows a gap of about $0.55$dB with $\mathbb{F}_{2^6}$-QSPA, at a FER of $10^{-3}$. We also evaluate performance of min-max decoding over expanded graphs, which is quite satisfactory. In the case of $\mathbb{F}_{2^3}$ expansion, min-max only has a gap of $0.06$dB with $\mathbb{F}_{2^3}$-QSPA, while the gap between $\mathbb{F}_{2^2}$-QSPA and $\mathbb{F}_{2^2}$-min-max is around $0.1$dB. Interestingly, these two decoding setups, which have complexity advantages of expansion and simplification, manage to outperform the max-log-SP algorithm over the original graph. Optimum values for scaling factors $(\delta,\psi)$ were found to be $(0.75,0.25)$ for $\mathbb{F}_{2^3}$-QSPA and $\mathbb{F}_{2^2}$-QSPA, $(0,0.3)$ for $\mathbb{F}_{2^3}$-min-max, and $(0,0.4)$ for $\mathbb{F}_{2^2}$-min-max.

\begin{figure}[ht]
\centering
\includegraphics[scale=0.5]{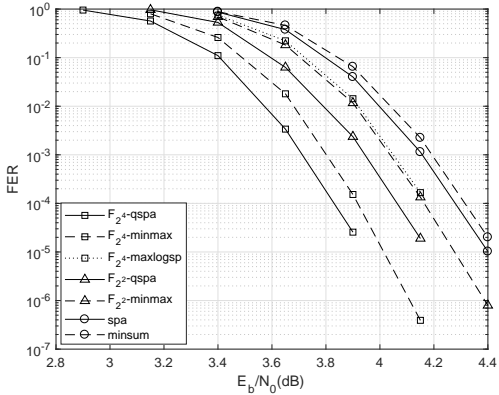}
\captionof{figure}{FER Perf. with a (1000,861) code over $GF(2^4)$ (\( \mathcal{C}_2 \))}
\end{figure}

Fig. 4 illustrates the FER performance of proposed schemes with a rate 0.861 code over $\mathbb{F}_{2^4}$, of 1000 symbols in length (\( \mathcal{C}_2 \)). The $\mathcal{C}_2$ was generated by re-labeling a regular binary graph of column weight 3, constructed with the progressive edge growth algorithm \cite{peg:hu}. For this code, we consider expansions over $\mathbb{F}_{2^2}$ and $\mathbb{F}_{2}$. Expansion over $\mathbb{F}_{2}$ is of special interest, since it results in a \textit{binary} graph. When using this binary graph, we replace QSPA and min-max decoding with SPA and its well-known simplification, min-sum algorithm (MSA). Unique features and advantages offered by the binary expansion have been explored separately in \cite{cpbd}.

Fig. 4 shows that the performance losses of the proposed schemes are quite small in this case as well. Gap between using QSPA on the original graph and its $\mathbb{F}_{2^2}$ expansion is less than $0.3$dB at a FER of $10^{-4}$. Loss of replacing QSPA by its simplification min-max decoding is about $0.1$dB for both original and expanded graphs. With $\mathcal{C}_2$, max-log-SP algorithm seems to perform a bit better than with $\mathcal{C}_1$. Here, its performance is very similar to that of using min-max algorithm on $\mathbb{F}_{2^2}$ expansion, with a gap of close to $0.4$dB with $\mathbb{F}_{2^4}$-QSPA, at a FER of $10^{-4}$. When compared with QSPA on the original graph, using SPA on the binary graph results in a $0.5$dB loss in performance. Simplifying SPA to MSA only loses a further $0.05$dB. Although a $0.5$dB loss seems significant, as explored in \cite{cpbd}, decoding on a binary expansion provides unique advantages in decoding complexity and hardware implementations. Optimum values for scaling factors $(\delta,\psi)$ here were $(0.5,0.25)$ for $\mathbb{F}_{2^2}$-QSPA, SPA, and MSA, and $(0,0.3)$ for $\mathbb{F}_{2^2}$-min-max.

Simulation results show that decoding algorithms implemented on proposed graph expansions are capable of performing quite close to those using the original graph. For any algorithm, performance gap of decoding on the expanded graph and using the original widens when the size of the field used for the expansion decreases. With a few different graph expansions possible, many decoding options become available for any given code. As discussed in the next Section, all these decoding schemes provide attractive complexity gains, with different levels of performance-complexity trade-offs. 


\section{Decoding Complexity}

In the following, we analyze the complexities of some decoding schemes on expanded graphs. We consider implementing the two popular versions of QSPA, LLR-QSPA \cite{qspa-llr:wymeersch} and FFT-QSPA \cite{qspa-fft:declerq}, and also min-max decoding \cite{minmax:savin} on proposed expansions and compare them in terms of complexity with the same algorithms implemented on the original graph. Since NB-LDPC codes are most often defined over finite fields of characteristic 2 \cite{nb-ldpc:davey-mckay}, a code over $\mathbb{F}_{2^r}$, where $r$ has a factor $m$, is used in the complexity analysis. Complexities of the two major steps in iterative decoding, check node operations and variable node operations, are compared separately. For the comparison, we consider operations at a single node of each type during one iteration. Since the proposed expansions replace each node over $\mathbb{F}_{2^r}$ with $E_f = \frac{2^r-1}{2^m-1}$ nodes, complexity of all those is the total complexity for the decoding schemes on expanded graphs. As explained in Section III $B$, these graphs also have the additional feature of \textit{local} check nodes. Since $L_f = \frac{2^r-1}{2^m-1}-\frac{r}{m}$ such nodes are included per variable node of the original graph, their complexities are included with that of variable nodes.

At hardware level, apart from the number of operations, the type of operation also affects the complexity. It is well-known that operations such as multiplications are more complex than comparisons \cite{hw-decoders:spagnol}. Therefore, we consider the number of operations of a few different types; comparisons (Comp), additions/subtractions (Add), multiplications/divisions (Mult) and table look-ups (LUT). Note that $\max^*$ operation in LLR-QSPA can be performed with one comparison, two additions, and one table look-up \cite{qspa-llr:wymeersch}, and that transformation between $\log$ and probability domain, required in FFT-QSPA, can be carried out with look-up tables. It has also been assumed that the forward-backward approach \cite{minmax:savin} is used in check node operations of the three algorithms. Further, cost of permuting probability vectors has been disregarded, since its impact on total complexity is negligible.

Table III lists complexities of check node operations in each decoding setup, while Table IV considers variable node operations. Average degrees of a check node and a variable node in the original graph are denoted with $d_c$ and $d_v$, while $d_{l}$ denotes the average degree of a local check node. As discussed in Section III $B$, local PCM is formed with Hamming codewords, and therefore it should always be possible to set $d_{l}=3$. Due to these new check nodes, average variable node degree would slightly increase in the expanded graphs, and we denote this new value with $\widetilde{{d}_v}$, given by
\[ \widetilde{{d}_v} = d_v + \frac{L_f\times d_l}{E_f} \tag{9}\]
Substituting the values for $E_f, L_f$ and $d_v$ yields
\[ \widetilde{{d}_v} = d_v + 3 - 3\times\frac{r(2^m-1)}{m(2^r-1)} \tag{10}\]
Note that degrees of \textit{regular} check nodes in the expanded graphs remain $d_c$. When presenting complexities of decoding schemes on these graphs, we let $E_f, L_f$ and $\widetilde{{d}_v}$ denote the number of new nodes per original node, number of local check nodes, and average variable node degree, respectively.

\begin{table}[ht]
\renewcommand{\arraystretch}{1}
\centering
\captionof{table}{Check Node Complexity}
\begin{tabular}{ |c|c|c|c|c| } 
\hline
 \multirow{2}{*}{\textit{Algorithm}} & \multicolumn{4}{c|}{\textit{Check Node Operations}}\\
 \cline{2-5}
    & {Comp} & {Add} & {Mult} & {LUT}\\
\hline
\hline
$\mathbb{F}_{2^r}$ & $(3d_c-4)\times$ & $(3d_c-4)\times$ & $0$ & $(3d_c-4)\times$\\
-LLR-QSPA & $2^r(2^r-1)$ & $2^r(3.2^r-2)$ & & $2^r(2^r-1)$\\
\hline
$\mathbb{F}_{2^m}$ & $E_f(3d_c-4)\times$ & $E_f(3d_c-4)\times$ & $0$ & $E_f(3d_c-4)\times$\\
-LLR-QSPA & $2^m(2^m-1)$ & $2^m(3.2^m-2)$ & & $2^m(2^m-1)$\\
\hline
\hline
$\mathbb{F}_{2^r}$ & $0$ & $2d_c\times$ & $(2d_c-1)\times$ & $2d_c\times$\\
-FFT-QSPA & & $2^rr$ & $2^r$ & $2^r$\\
\hline
$\mathbb{F}_{2^m}$ & $0$ & $E_f.2d_c\times$ & $E_f(2d_c-1)\times$ & $E_f.2d_c\times$\\
-FFT-QSPA & & $2^mm$ & $2^m$ & $2^m$\\
\hline
\hline
$\mathbb{F}_{2^r}$ & $(3d_c-4)\times$ & 0 & 0 & 0\\
-Min-Max & $2^r(2.2^{r}-1)$ & & & \\
\hline 
$\mathbb{F}_{2^m}$ & $E_f(3d_c-4)\times$ & 0 & 0 & 0\\
-Min-Max & $2^m(2.2^{m}-1)$ & & & \\
\hline
\end{tabular}
\end{table}

From Table III, it can be seen that complexity gains of proposed schemes at check node operations depend on the decoding algorithm being used. For both LLR-QSPA and min-max decoding, using an expanded graph instead of the original results in a significant reduction in complexity, while for FFT-QSPA, the gains are modest. In the case of LLR-QSPA, using the original graph requires approximately $3d_c\times 2^{2r}$ comparisons, additions, and table look-ups, which results in an overall complexity of $\mathcal{O}(2^{2r})$. However, with the expansion over $\mathbb{F}_{2^m}$, there are only approximately $3d_c\times 2^{r+m}$ operations of each type, which reduces overall complexity to $\mathcal{O}(2^{r+m})$. This is a significant gain, especially in the cases with a large $r$, and we feel that, as a trade-off, the small performance losses observed in Section V are justifiable. Using an expanded graph can reduce the complexity order from $\mathcal{O}(2^{2r})$ to $\mathcal{O}(2^{r+m})$ in check node operations of min-max decoding as well. It should be noted that although they are of the same complexity order, min-max decoding is simpler than LLR-QSPA, since only comparisons are required. Gains of the proposed scheme reduce in the case of FFT-QSPA. Here, the number of multiplications and table look-ups required are almost the same (approximately $2d_c\times 2^r$) when using the original graph or an expanded one. There is a slight reduction in the number of additions though, from approximately $2d_c\times 2^rr$ to $2d_c\times 2^rm$. Thus, the overall complexity of FFT-QSPA on an expanded graph is $\mathcal{O}(2^rm)$, slightly lower than $\mathcal{O}(2^rr)$ on the original graph.


\begin{table}[ht]
\renewcommand{\arraystretch}{1}
\centering
\captionof{table}{Variable Node Complexity}
\begin{tabular}{ |c|c|c|c|c| } 
\hline
 \multirow{2}{*}{\textit{Algorithm}} & \multicolumn{4}{c|}{\textit{Variable Node Operations}}\\
 \cline{2-5}
    & {Comp} & {Add} & {Mult} & {LUT}\\
\hline
\hline
$\mathbb{F}_{2^r}$ & $2^r-1$ & $2d_v\times$ & $0$ & $0$\\
-LLR-QSPA & & $2^r$ & & \\
\hline
$\mathbb{F}_{2^m}$ & $(r/m)\times$ & $E_f.2\widetilde{d_v} \times$ & $0$ & $0$ \\
-LLR-QSPA & $(2^m-1)$ & $2^m$ & & \\
\hdashline
 & $5L_f\times$ & $5L_f\times$ & $0$ & $5L_f\times$ \\
Local Checks & $2^m(2^m-1)$ & $2^m(3.2^m-2)$ & & $2^m(2^m-1)$\\
\hline
\hline
$\mathbb{F}_{2^r}$ & $2^r-1$ & $2d_v\times$ & $0$ & $0$\\
-FFT-QSPA & & $2^r$ & & \\
\hline
$\mathbb{F}_{2^m}$ & $(r/m)\times$ & $E_f.2\widetilde{d_v} \times$ & $0$ & $0$ \\
-FFT-QSPA & $(2^m-1)$ & $2^m$ & & \\
\hdashline
 & $0$ & $6L_f\times$ & $5L_f\times$ & $6L_f\times$\\
Local Checks & & $2^mm$ & $2^m$ & $2^m$ \\
\hline
\hline
$\mathbb{F}_{2^r}$ & $(d_v+1)\times$ & $3d_v\times$ & 0 & 0\\
-Min-Max & $2^r$ & $2^r$ & & \\
\hline 
$\mathbb{F}_{2^m}$ & $(E_f.\widetilde{d_v}+r/m)$ & $E_f.3\widetilde{d_v}\times $ & $0$ & $0$\\
-Min-Max & $\times 2^m$ & $2^m$ & & \\
\hdashline
 & $5L_f\times$ & $0$ & $0$ & $0$\\
Local Checks & $2^m(2.2^m-1)$ & & & \\
\hline
\end{tabular}
\end{table}

When considering variable node operations of decoding schemes on expanded graphs, we include the complexity of the $L_f$ local check nodes added for each \textit{original} variable node. Note that complexity of one such node can be derived by substituting $d_l=3$ as the node degree, and $2^m$ as the field size, in the expressions for the respective algorithm in Table III. Due to this additional cost, complexity at variable nodes are \textit{higher} in proposed schemes. However, this complexity increase is not sufficiently high to completely offset the gain obtained at check node operations, especially for LLR-QSPA and min-max decoding. As Table IV shows, complexity orders of these algorithms change from $\mathcal{O}(2^r)$ on the original graph to $\mathcal{O}(2^{r+m})$ on an expanded one, while in Table III, this change is from $\mathcal{O}(2^{2r})$ to $\mathcal{O}(2^{r+m})$ at check node operations. Hence, the overall gain is still significant for LLR-QSPA and min-max decoding, especially for larger values of $r$. In the case of FFT-QSPA, the complexity increase is comparatively smaller, from $\mathcal{O}(2^r)$ to $\mathcal{O}(2^rm)$. Since its gain at check nodes was also quite modest, the overall complexity gain would be minimal.

Tables III and IV demonstrate that decoding on expanded graphs is advantageous in terms of asymptotic complexity, while the actual performance gains would depend on parameters of the code used, such as field sizes, code length, rate, and average node degrees. In Table V, we consider complexities of some decoding schemes used in Section V with $\mathcal{C}_1$, a code over $\mathbb{F}_{2^6}$ with the codeword length $1998$ and code rate $0.89$. In this case, the original graph is over $\mathbb{F}_{2^6}$, and expansions over $\mathbb{F}_{2^3}$ and $\mathbb{F}_{2^2}$ are used for decoding. Table V presents complexities of using LLR-QSPA, FFT-QSPA, and min-max decoding on all three graphs, in terms of number of operations of each type per iteration. For decoding schemes over expansions, we also present the number of operations required as a percentage of the requirement when using the same algorithm with the original graph.

\begin{table}[ht]
\renewcommand{\arraystretch}{1}
\centering
\captionof{table}{Number of Operations per Iteration with $\mathcal{C}_1$}
\begin{tabular}{ |c|c|c|c|c| } 
\hline
 \multirow{2}{*}{\textit{Algorithm}} & \multicolumn{4}{c|}{\textit{Number of Operations} ($\times 10^5$)}\\
 \cline{2-5}
    & {Comp} & {Add} & {Mult} & {LUT}\\
\hline
\hline
$\mathbb{F}_{2^6}$-LLR-QSPA & 932.17 & 2817.73 & - & 930.91 \\
\hline
$\mathbb{F}_{2^3}$-LLR-QSPA & 155.8 & 507.01 & - & 155.52 \\
 & ($\approx 17\%$) & ($\approx 18\%$) &  & ($\approx 17\%$) \\
\hline
$\mathbb{F}_{2^2}$-LLR-QSPA & 79.94 & 287.93 & - & 79.76 \\
 & ($\approx 8\%$) & ($\approx 10\%$) &  & ($\approx 8\%$) \\
\hline
\hline
$\mathbb{F}_{2^6}$-FFT-QSPA & 1.26 & 71.61 & 10.09 & 10.23\\
\hline
$\mathbb{F}_{2^3}$-FFT-QSPA & 0.28 & 72.89 & 16.94 & 18.22 \\
 & ($\approx 22\%$) & ($\approx 101\%$) & ($\approx 168\%$) & ($\approx 178\%$)\\
\hline
$\mathbb{F}_{2^2}$-FFT-QSPA & 0.18 & 66.17 & 20.43 & 22.06 \\
 & ($\approx 14\%$) & ($\approx 92\%$) & ($\approx 202\%$) & ($\approx 215\%$)\\
\hline
\hline
$\mathbb{F}_{2^6}$-Min-Max & 1882.99 & 15.35 & - & - \\
\hline 
$\mathbb{F}_{2^3}$-Min-Max & 342.7 & 27.33 & - & - \\
 & ($\approx 18\%$) & ($\approx 178\%$) &  & \\
\hline
$\mathbb{F}_{2^2}$-Min-Max & 197.38 & 33.09 & - & - \\
 & ($\approx 10\%$) & ($\approx 215\%$) &  & \\
\hline
\end{tabular}
\end{table}

In Table V, we observe that using LLR-QSPA on expanded graphs offers exceptional complexity gains for $\mathcal{C}_1$. Less than $20\%$ of the operations for the original graph are required when using the $\mathbb{F}_{2^3}$ expansion. This reduces further with the $\mathbb{F}_{2^2}$ expansion, to less than $10\%$. These gains correspond to speed-ups of \textit{more than 5 times} in the $\mathbb{F}_{2^3}$ case, and \textit{more than 10 times} in the $\mathbb{F}_{2^2}$ case. Considering that the performance losses, as shown in Section V, are only $0.2$dB and $0.3$dB, the complexity gains are very attractive. With FFT-QSPA though, using expansions are not particularly advantageous. Only gain of $\mathbb{F}_{2^3}$ expansion, when compared with using the algorithm on the original $\mathbb{F}_{2^6}$ graph, is in the number of comparisons required. Both decoding setups use a similar number of additions, while the setup on the expanded graph needs significantly more multiplications and table look-ups. This is due to the operations of local check nodes, which are absent in the original graph. With $\mathbb{F}_{2^2}$ expansion, the number of comparisons reduces further, and the number of additions used is also slightly lesser than that of the $\mathbb{F}_{2^6}$ case. Since $\mathbb{F}_{2^2}$ expansion has more local check nodes than the $\mathbb{F}_{2^3}$ one, the number of multiplications and table look-ups have increased significantly. Thus, for $\mathcal{C}_1$, using FFT-QSPA with any of the two expansions is \textit{more complex} than implementing on the original graph. The case of min-max decoding is very similar to that of LLR-QSPA; complexity gains are significant, and they are higher when the size of the field used is smaller. Due to local check node operations, the number of additions in proposed schemes is higher than in the original algorithm. Nevertheless, since the reduction in the number of comparisons is much higher in magnitude, min-max decoding on expanded graphs is significantly less complex.

Majority of existing algorithms are of complexity order $\mathcal{O}(2^{2r})$ for a code over $\mathbb{F}_{2^r}$, and implementing those algorithms on graph expansions results in significant complexity gains with minimal performance losses. For algorithms whose complexity order is not polynomial in field size, such as FFT-QSPA, the new strategy may not be advantageous. But as \cite{hw-decoders:spagnol} pointed out, out of the two variants of QSPA, LLR-QSPA is more suitable for hardware implementations, due to better numerical stability of LLR domain operations. Therefore, the strategy proposed in this paper could be applied to reduce decoding complexity in most practical applications that adopt NB-LDPC codes. In particular, our proposed strategy enables to decode a code defined over a large field using a graph over a much smaller field, while providing a good performance and complexity tradeoff, leading to a practical solution to decoding NB-LDPC codes.

\section{Conclusions}

In this paper, we proposed a new method to expand a Tanner graph of a NB-LDPC code over $\mathbb{F}_{p^r}$ into a graph over $\mathbb{F}_{p^m}$, where $m$ is a factor of $r$. Most decoding algorithms proposed for NB-LDPC codes can be adapted to use these expanded graphs with simple modifications. This offers a number of different decoding options for any given code, with a different performance-complexity trade-off. Simulation results show that, in general, decoding on expanded graphs provide significant complexity gains, while performance losses are minimal. It may be interesting to note that the proposed  expansion could be useful in other applications beyond decoding NB-LDPC codes.


%



\ifCLASSOPTIONcaptionsoff
  \newpage
\fi

\end{document}